\setlist[itemize]{noitemsep, topsep=0pt}
\newcommand{\reffig}[1]{Fig.~\ref{#1}}
\newcommand{\refsec}[1]{Section~\ref{#1}}
\raggedright\begin{zed}}%
\newenvironment{fma*}%
{\begin{equation*}\begin{minipage}{0.85\linewidth}\raggedright\begin{zed}}%
{\end{zed}\end{minipage}\end{equation*}}
\def\nat        {\hbox to 1.3pt{{I}\hss}\mbox{N}}
\def\func #1 #2 { $f$:\mbox{#1}$\rightarrow$\mbox{#2} }
\newcommand{\linefill}{\cleaders\hbox{$\smash{\mkern-2mu\mathord-\mkern-2mu}$}\hfill\vphantom{\lower1pt\hbox{$\rightarrow$}}}  
\newcommand{\transi}[2]{\mathrel{\lower1pt\hbox{$\mathrel-_{\vphantom{#2}}\mkern-8mu\stackrel{#1}{\linefill_{\vphantom{#2}}}\mkern-11mu\rightarrow_{#2}$}}}
\newcommand{\ntransi}[2]{\mathrel{\lower1pt\hbox{$\mathrel-_{\vphantom{#2}}\mkern-8mu\stackrel{#1}{\linefill_{\vphantom{#2}}}\mkern-8mu\nrightarrow_{#2}$}}}
\newcommand{\abort}{\texttt{abort}}
\newcommand{\tmbegin}{\texttt{TMBegin}}
\newcommand{\tmbegino}{\texttt{TMBegin}({\tt ok})}
\newcommand{\tmend}{\texttt{TMCommit}}
\newcommand{\tmendc}{\texttt{TMCommit}({\tt ok})}
\newcommand{\tmenda}{\texttt{TMCommit}({\tt abort})}
\newcommand{\tmread}{\texttt{TMRd}}
\newcommand{\tmreada}{\texttt{TMRd}({\tt abort})}
\newcommand{\tmwrite}{\texttt{TMWr}}
\newcommand{\tmwriteo}{\texttt{TMWr}({\tt ok})}
\newcommand{\tmwritea}{\texttt{TMWr}({\tt abort})}
\definecolor{darkgreen}{rgb}{0,0.2,0}
\renewcommand{\pfun}{\p\tfun} 
\newcommand{\ops}{{\it ops}}
\newcommand{\Tids}{T}
\newcommand{\crash}{\mathit{c}}
\newcommand{\beginInv}[1]{\ensuremath{inv_{#1}(\tmbegin)}}
\newcommand{\beginResp}[1]{\ensuremath{resp_{#1}(\tmbegin)}}
\newcommand{\readInv}[2]{\ensuremath{inv_{#1}(\tmread(#2))}}
\newcommand{\readResp}[2]{\ensuremath{resp_{#1}(\tmread(#2))}}
\newcommand{\writeInv}[2]{\ensuremath{inv_{#1}(\tmwrite(#2))}}
\newcommand{\writeResp}[1]{\ensuremath{resp_{#1}(\tmwrite)}}
\newcommand{\commitInv}[1]{\ensuremath{inv_{#1}(\tmend)}}
\newcommand{\commitResp}[1]{\ensuremath{resp_{#1}(\tmend)}}
\newcommand{\abortResp}[1]{\ensuremath{resp_{#1}(\abort)}}
\newcommand{\run}[1]{run_t}
\newcommand{\crashed}{crash}
\newcommand{\doCommitReadOnly}[2]{\ensuremath{\texttt{DoCommitReadOnly}_{#1}(#2)}}
\newcommand{\doCommitWriter}[1]{\ensuremath{\texttt{DoCommitWriter}_{#1}}}
\newcommand{\doRead}[2]{\ensuremath{\texttt{DoRead}_{#1}(#2)}}
\newcommand{\doWrite}[2]{\ensuremath{\texttt{DoWrite}_{#1}(#2)}}
\newcommand{\action}[3]{\ensuremath{
\begin{array}[t]{ll}
\multicolumn{2}{l}{#1}\\
\textsf{Pre: }&#2\\
\textsf{Eff: }&#3
\end{array}
}}
\newcommand{\statusText}[1]{\text{#1}}
\newcommand{\pcNotStarted}{\statusText{notStarted}}
\newcommand{\pcBeginPending}{\statusText{beginPending}}
\newcommand{\pcReady}{\statusText{ready}}
\newcommand{\pcDoWrite}{\statusText{doWrite}}
\newcommand{\pcWriteResp}{\statusText{writeResp}}
\newcommand{\pcDoRead}{\statusText{doRead}}
\newcommand{\pcReadResp}{\statusText{readResp}}
\newcommand{\pcDoCommit}{\statusText{doCommit}}
\newcommand{\pcCommitResp}{\statusText{commitResp}}
\newcommand{\pcCommitted}{\statusText{committed}}
\newcommand{\pcAborted}{\statusText{aborted}}
\newcommand{\sif}{\texttt{if}}
\newcommand{\sthen}{\texttt{then}}
\newcommand{\selse}{\texttt{else}}
\newcommand{\DTMS}{\text{\sc dTMS2}\xspace}
\newcommand{\DTML}{\text{\sc dTML}\xspace}
\newcommand{\TMS}{\text{\sc TMS2}\xspace}
\newcommand{\TML}{\text{\sc TML}\xspace}
\def \st {\spot}
\title{Defining and Verifying Durable Opacity: Correctness for
  Persistent Software Transactional Memory \thanks{Bila and Dongol are
    supported by VeTSS project ``Persistent Safety and Security''.
    Dongol is supported by EPSRC projects EP/R019045/2 and
    EP/R032556/1. Derrick and Doherty are supported by EPSRC grant
    EP/R032351/1. Wehrheim is supported by DFG grant \mbox{WE2290/12-1}. }}
\author{Eleni Bila$^1$ \and Simon Doherty$^2$ \and Brijesh
  Dongol$^1$  \and John Derrick$^2$ \and 
  Gerhard Schellhorn$^3$ \and Heike Wehrheim$^4$}
\institute{University of Surrey, UK \and University of Sheffield, UK \and University of Augsburg, Germany \and Paderborn University, Germany}
\begin{document}
\label{firstpage}

\maketitle

\begin{abstract}

  Non-volatile memory (NVM), aka persistent memory, is
  a new para\-digm for memory that preserves its contents even after
  power loss. The expected ubiquity of NVM has stimulated interest in
  the design of novel concepts ensuring correctness of concurrent
  programming abstractions in the face of persistency. So far, this
  has lead to the design of a number of persistent concurrent data
  structures, built to satisfy an associated notion of correctness:
  durable linearizability.

  In this paper, we transfer the principle of durable concurrent
  correctness to the area of software transactional memory
  (STM). Software transactional memory algorithms allow for concurrent
  access to shared state. Like linearizability for concurrent data
  structures, opacity is the established notion of correctness for
  STMs. First, we provide a novel definition of durable opacity
  extending opacity to handle crashes and recovery in the context of
  NVM. Second, we develop a durably opaque version of an existing STM
  algorithm, namely the Transactional Mutex Lock (TML). Third, we
  design a proof technique for durable opacity based on refinement
  between TML and an operational characterisation of durable opacity
  by adapting the TMS2 specification. Finally, we apply this proof
  technique to show that the durable version of TML is indeed durably
  opaque. The correctness proof is mechanized within Isabelle.
\end{abstract}


\section{Introduction}\label{sec:intro}
Recent technological advances indicate that future architectures will
employ some form of {\em non-volatile memory} (NVM) that retains its
contents after a system crash (e.g., power outage). NVM is intended to
be used as an intermediate layer between traditional \emph{volatile
  memory} (VM) and secondary storage, and has the potential to vastly
improve system speed and stability. Software that uses NVM has the
potential to be more robust; in case of a crash, a system state before
the crash may be recovered using contents from NVM, as opposed to
being restarted from secondary storage. However, because the same data
is stored in both a volatile and non-volatile manner, and because NVM
is updated at a slower rate than VM, recovery to a consistent state
may not always be possible. This is particularly true for concurrent
systems, where coping with NVM requires introduction of additional
synchronisation instructions into a program.

This observation has led to the design of the first {\em persistent}
concurrent programming abstractions, so far mainly concurrent data
structures. Together with these, the associated notion of correctness,
i.e., linearizability~\cite{HeWi90}, has been transferred to NVM. This
resulted in the novel concept of {\em durable
  linearizability}~\cite{DBLP:conf/wdag/IzraelevitzMS16}.  A first
proof technique for showing durable linearizability of concurrent data
structures has been proposed by Derrick et
al.~\cite{DBLP:conf/fm/DerrickDDSW19}.

Besides concurrent data structures, software transactional memory
(STM) is the most important synchronization mechanism supporting
concurrent access to shared state. STMs provide an \emph{illusion of
  atomicity} in concurrent programs.  The analogy of STM is with
database transactions, which perform a series of accesses/updates to
shared data (via read and write operations) atomically in an
all-or-nothing manner. Similarly with an STM, if a transaction
commits, all its operations succeed, and in the aborting case, all its
operations fail.  The now (mainly) agreed upon correctness criterion
for STMs is {\em opacity}~\cite{2010Guerraoui}.  Opacity requires all
transactions (including aborting ones) to agree on a single sequential
history of committed transactions and the outcome of transactions has
to coincide with this history.

\smallskip\noindent
In this paper, we transfer STM and opacity to the novel field 
of non-volatile memory. This entails a number of steps. 
First, the correctness criterion of opacity has to be adapted 
to cope with crashes in system executions. 
Second, STM algorithms have to be extended to deal with the 
coexistence of volatile and non-volatile memory during execution 
and need to be equipped with recovery operations. 
Third, proof techniques for opacity need to be re-investigated 
to make them usable for durable opacity. 
In this paper, we provide contributions to all three steps. 

\begin{itemize} \setlength{\topsep}{0pt}
\item For the first step, we define {\em durable opacity} out of
  opacity in the same way that durable linearizability has been
  defined based on linearizability. Durable opacity requires the
  executions of STMs to be opaque even if they are interspersed with
  crashes.  This guarantees that the shared state remains consistent.
\item We exemplify the second step by extending the Transactional
  Mutex Lock (TML) of Dalessandro et al.~\cite{DalessandroDSSS10} to
  durable TML (\DTML).  To this end, TML needs to be equipped with a
  recovery operation and special statements to guarantee consistency
  in case of crashes. We do so by extending TML with a logging
  mechanism which allows to flush written, but volatile values to NVM
  during recovery.
\item For the third step, we build on a proof technique for opacity
  based on refinement between IO-automata.  This technique uses the
  automaton TMS2 \cite{DGLM13} which has been shown to implement
  opacity \cite{LLM12} using the PVS interactive theorem prover.  This
  automaton gives us a formal specification, which can be used as the
  abstract level in a proof of refinement. Furthermore, the
  IO-automaton framework is part of the standard Isabelle
  distribution.  For use as a proof technique for durable opacity,
  TMS2 is extended with a crash operation (mimicing system crashes and
  their effect on memory) to yield \DTMS. The automaton \DTMS is then
  proven to only have durably opaque executions.  Thereby we obtain an
  operational characterisation of durable opacity.
\end{itemize}

Finally, we bring all three steps together and apply our proof technique to show that durable TML is indeed durably opaque. 
This proof has been mechanized in the interactive prover Isabelle \cite{NipkowPW02}. 
Our mechanized proof
proceeds by encoding \DTMS and \DTML as IO-automata within Isabelle,
then proving the existence of a forward simulation, which in turn has
been shown to ensure trace refinement of IO-automata~\cite{LynchVaan95},  and hence guarantees durable opacity of \DTML.




\section{Transactional Memory and Opacity}
\label{sec:softw-trans-memory}

Software Transactional Memory (STM) provides programmers with an
easy-to-use synchronisation mechanism for concurrent access to shared
data, whereby blocks of code may be treated as transactions that
execute with an illusion of atomicity.
STMs usually provide a number
of operations to programmers: operations to start (\tmbegin) and
commit a transaction (\tmend), and operations to read and write shared
data (\tmread, \tmwrite). These operations can be called (invoked)
from within a client program (possibly with some arguments, e.g., the
variable to be read) and then will return with a response. Except for
operations that start transactions, all other operations might
potentially respond with \texttt{abort}, thereby aborting the whole
transaction. 

A widely accepted correctness condition for STMs that encapsulates
transactional phenomena is
\emph{opacity}~\cite{2010Guerraoui,GuerraouiK08}, which requires all
transactions, including aborting transactions to agree on a single
sequential global ordering of transactions. Moreover, no transactional
read returns a value that is inconsistent with the global ordering.

\subsection{Histories}
\label{sec:histories}
As standard in the literature, opacity is defined on the
\emph{histories} of an implementation.  Histories are sequences of
{\em events} that record all interactions between the implementation
and its clients.  An event is either an invocation ({\em inv}) or a
response ({\em res}) of a transactional operation.  For the TML
implementation, possible invocation and \emph{matching} response
events are given in Table~\ref{fig:stmevents}, where we assume $T$ is
a set of transaction identifiers, $L$ a set of addresses (or
locations) mapped to values from a set $V$.

The type $Mem \sdef L \rightarrow V$ describes the possible states of
the shared memory. We assume that initially all addresses hold the
value $0 \in V$.

\begin{table}[t]
\begin{tabular}{l@{\qquad\qquad}l}
  invocations~ & possible matching responses \\
  \hline 
  $inv_t(\tmbegin)$ & $res_t(\tmbegino)$ \\[1pt]
  $inv_t(\tmend)$ & $res_t(\tmendc)$, $res_t(\tmenda)$ \\[1pt]
  $inv_t(\tmread(x))$ & $res_t(\tmread(v))$, $res_t(\tmreada)$ \\[1pt]
  $inv_t(\tmwrite(x,v))$ & $res_t(\tmwriteo)$, $res_t(\tmwritea)$ 
\end{tabular} 
\caption{Events appearing in the histories of TML, where $t \in T$ is
  a transaction identifier, $x \in L$ is a location, and $v\in V$ a
  value}
\label{fig:stmevents}
\end{table}


 
We use the following notation on histories: for a history $h$,
$h \zproject t$ is the projection onto the events of transaction $t$
only and $h[i..j]$ the subsequence of $h$ from $h(i)$ to $h(j)$
inclusive. For a response event $e$, we let $rval(e)$ denote the value
returned by $e$; for instance $rval(\tmbegino) = \texttt{ok}$. If $e$
is not a response event, then we let $rval(e) = \bot$.

We are interested in three different types of histories
\cite{DBLP:conf/forte/ArmstrongDD17}. At the concrete level the TML
implementation produces histories where the events are
interleaved. 
At the
abstract level we are interested in {\em sequential histories}, which
are ones where there is no interleaving at any level - transactions
are atomic: a transaction completes before the next transaction
starts.  As part of the proof of opacity we use an intermediate
specification which has {\em alternating histories}, in which
transactions may be interleaved but operations (e.g., reads, writes)
are not interleaved.

A history $h$ is \emph{alternating} if $h = \epsilon$ or is an
alternating sequence of invocation and matching response events
starting with an invocation. For the rest of this paper, we assume
each process invokes at most one operation at a time, and hence,
assume that $h\zproject t$ is alternating for any history $h$ and
transaction $t$. Note that this does not necessarily mean $h$ is
alternating itself.  Opacity is defined for well-formed histories,
which formalises the allowable interaction between an STM
implementation and its clients. For every $t$,
$h\zproject t = \lseq s_0, \ldots, s_m\rseq$ of a well-formed history
is an alternating history such that $s_0 = inv_t(\tmbegin)$, for all
$0 < i < m$, event $s_i \neq inv_t(\tmbegin)$ and
$rval(s_i) \notin \{\texttt{commit}, \texttt{abort}\}$. Note that by
definition, well-formedness disallows transaction identifiers from
being reused. 
We say $t$ is {\em committed} if $rval(s_m) = \texttt{commit}$ and
{\em aborted} if $rval(s_m) = \texttt{abort}$.  In these cases, the
transaction $t$ is {\em completed}, otherwise $t$ is {\em live}.  A
history is {\em well-formed} if it consists of transactions only and
there is at most one live transaction per process.




\subsection{Opacity}
\label{sec:opacity}

Opacity \cite{2010Guerraoui,GuerraouiK08} compares concurrent
histories generated by an STM implementation to sequential histories
and can be seen as a strengthening of serializability to accommodate
aborted transactions. Below, we first formalise the sequential history
semantics, then consider opaque histories.

\paragraph{Sequential history semantics.}


\noindent 
A sequential history has to ensure that the behaviour is meaningful
with respect to the reads and writes of the transactions. 

\begin{definition}[Valid history]\label{def:valid-hist}
  Let $h = ev_0, \ldots ,
  ev_{2n-1} $ be a sequence of alternating invocation and matching response
  events 
  starting with an invocation and
  ending with a response.

  We say $h$ is {\em valid} if there exists a sequence of states
  $\sigma_0, \ldots , \sigma_{n}$ such that $\sigma_0(l) = 0$ for all
  $l \in L$, and for all $i$ such that $0 \leq i < n$ and $t \in T$:
  \begin{enumerate}
  \item if $ ev_{2i} = inv_t(\tmwrite(l,v))$ and
    $ev_{2i+1} = res_t(\tmwriteo)$ 
    then $\sigma_{i+1} = \sigma_{i}[l:=v]$, 
    
  \item if $ev_{2i} = inv_t(\tmread(l))$ and
    $ev_{2i + 1} = res_t(\tmread(v))$ then $\sigma_i(l)=v$ and
    $\sigma_{i+1} = \sigma_{i}$, 
  \item for all other pairs of events (reads and writes with an abort
    response, as well as begin and commit events) we require
    $\sigma_{i+1} = \sigma_{i}$.
  \end{enumerate}
  We write $\llbracket h \rrbracket(\sigma)$ if $\sigma$ 
  is a sequence of states that makes $h$ valid (since
  the sequence is unique, if it exists, it can be viewed
  as the semantics of $h$).
\end{definition}

The point of TM is that the effect of the writes only takes place if
the transaction commits. Writes in a transaction that abort do not
affect the memory.  However, all reads, including those executed by
aborted transactions, must be consistent with previously committed
writes. Therefore, only some histories of an object reflect ones that
could be produced by a TM. We call these the {\em legal} histories,
and they are defined as follows.

\begin{definition}[Legal histories]
  \label{legal}
  Let $hs$ be a non-interleaved history and $i$ an index of $hs$.  
  Let $hs'$ be the projection of $hs[0..(i-1)]$ onto all events of committed
  transactions plus the events of the transaction to which $hs(i)$
  belongs. 
  Then we say \emph{$hs$ is legal at $i$} whenever $hs'$ is valid.
  We say \emph{$hs$ is legal} iff it is legal at each index
  $i$. 
\end{definition}

\noindent This allows us to define sequentiality for a single history,
which we lift to the level of specifications.  

\begin{definition}[Sequential history]
  \label{def:sequential}  
  A well-formed history $hs$ is {\em sequential} if it is
  non-interleaved and legal. 
  We denote by ${\cal S}$  the
  set of all possible well-formed sequential histories. 
\end{definition}

\paragraph{Opaque histories.}
Opacity is defined by matching a concurrent history to a sequential
history such that (a) both histories consist of the same events, and
(b) the real-time order of transactions is preserved.
%
For (b), the {\em real-time order} on transactions $t_1$ and $t_2$ in
a history $h$ is defined as $t_1 \prec_h t_2$ if $t_1$ is a completed
transaction and the last event of $t_1$ in $h$ occurs before the first
event of $t_2$.

A given concrete history may be incomplete, i.e., it may contain pending operations,
represented by invocations that do not have matching responses. Some of these pending operations
may have taken effect, and others may not. The corresponding sequential history
however must decide: either by adding a suitable matching response
event for the pending invocation (the effect has taken place), or by
removing the pending invocation (no effect yet). Therefore, we define a
function $complete(h)$ that constructs all possible completions of $h$
by appending matching responses for some pending invocations and removing
all the other pending invocations. This is similar to the treatment of completions in
the formalisation of linearizability \cite{HeWi90}. The sequential
history then must have the same events as those of one of the results
returned by $complete(h)$.


\begin{definition}[Opaque history]
  \label{def:opaque}  
  \label{opaquedef}
  A history $h$ is {\em end-to-end opaque} iff for some
  $hc \in complete(h)$, there exists a sequential history
  $hs \in {\cal S}$ such that for all $t \in T$,
  $hc \zproject t = hs \zproject t$ and
  $\prec_{hc} \subseteq \prec_{hs}$.  A history $h$ is {\em opaque}
  iff each prefix $h'$ of $h$ is end-to-end opaque; a set of histories
  ${\cal H}$ is {\em opaque} iff each $h \in {\cal H}$ is opaque; and
  an STM implementation is \emph{opaque} iff its set of histories is
  opaque.
\end{definition}


\section{STMs over Persistent Memory}

We now consider STMs over a non-volatile memory model comprising two
layers: a {\em volatile store}, whose contents are wiped clean when a
system crashes (e.g., due to power loss), and a {\em persistent
  store}, whose state is preserved after a crash and available for use
upon reboot. During normal program execution, contents of the volatile
store may be transferred to the persistent store by the system.  The
main idea behind programs for this memory model is to include a
\emph{recovery procedure} that executes over the persistent store and
resets the system into a consistent (safe) state. To achieve this, a
programmer can control transfer of information from volatile to
persistent store using a {\tt FLUSH(a)} operation, ensuring that the
information in address {\tt a} is saved in the persistent store.

For STMs, we introduce a new notion of consistency: \emph{durable
  opacity} which we define in \refsec{sec:durable-opacity}. Durable
opacity extends opacity~\cite{2010Guerraoui,GuerraouiK08} in exactly
the same way that durable
linearizability~\cite{DBLP:conf/wdag/IzraelevitzMS16} extends
linearizability~\cite{HeWi90}, namely a history that contains crashes
is durably opaque precisely when the same history with crashes removed
is opaque. We present an example STM implementation that satisfies
durable opacity in \refsec{sec:exampl-trans-mutex}, extending
Dalessandro et al.'s Transactional Mutex
Lock~\cite{DBLP:conf/ppopp/DalessandroSS10}.

\subsection{Durable Opacity}
\label{sec:durable-opacity}
Durable opacity is a correctness condition that is defined over
\emph{histories} that record the \emph{invocation} and \emph{response}
events of operations executed on the transactional memory like
opacity.  Unlike opacity, durably opaque histories record system crash
events, thus may take the form:
$H = h_0 \crash_1 h_1 \crash_2 ...h_{n-1} \crash_n h_n$, where each
$h_i$ is a history (containing no crash events) and $\crash_i$ is the
$i$th crash event.  Following Izraelevitz et
al.\cite{DBLP:conf/wdag/IzraelevitzMS16}, for a history $h$, we let
$\ops(h)$ denote $h$ restricted to non-crash events, thus for $H$
above, $ops(H) = h_0 h_1 \dots h_{n-1} h_n$, which contains no crash
events. We call the subhistory $h_i$ the {\em $i$-th era} of
$h$.  

The definition of a well-formed history is now updated to include
crash events. A history is \emph{durably well-formed} iff $ops(h)$ is
well formed and every transaction identifier appears in at most one
era. Thus, we assume that when a crash occurs, all running
transactions are aborted.  

\begin{definition}[Durably opaque history]
  \label{def:duropaque}
  A history $h$ is {\em durably opaque} iff it is durably well-formed
  and $ops(h)$ is opaque.
\end{definition}

\subsection{Example: Durable Transactional Mutex Lock}
\label{sec:exampl-trans-mutex}

We now develop a durably opaque STM: a persistent memory version of
the Transactional Mutex Lock (TML)~\cite{DalessandroDSSS10}, as given
in Fig.~\ref{fig:DTML}.  TML adopts a strict policy for transactional
synchronisation: as soon as one transaction has successfully written
to a variable, all other transactions running concurrently will be
aborted when they invoke another read or write operation. To enforce
this policy, TML uses a global counter {\tt glb} (initially $0$) and
local variable {\tt loc}, which is used to store a copy of {\tt
  glb}. Variable {\tt glb} records whether there is a {\em live
  writing transaction}, i.e., a transaction that has started, has not
yet ended nor aborted, and has executed (or is executing) a write
operation. More precisely, {\tt glb} is odd if there is a live writing
transaction, and even otherwise. Initially, we have no live writing
transaction and thus {\tt glb} is 0 (and hence even).

A second distinguishing feature of TML is that it performs writes in
an \emph{eager} manner, i.e., it updates shared memory during the
write operation\footnote{This is in contrast to lazy implementations
  that defer transactional writes until the commit operation is
  executed (e.g.,
  \mbox{\cite{DBLP:conf/wdag/DiceSS06,DBLP:conf/ppopp/DalessandroSS10}}).}.
This is potentially problematic in a persistent memory context since
writes that have completed may not be committed if a crash occurs
prior to executing the commit operation. That is, writes of
uncommitted transactions should not be seen by any transactions that
start after a crash occurs. Our implementation makes use of an {\em
  undo log} mapping addresses to their persistent memory values prior
to executing the first write operation for that address. Logged values
are made persistent before the address is overwritten. Thus, if a
crash occurs prior to a transaction committing, it is possible to
recover the transaction to a safe state by undoing uncommitted
transactional writes.

Operation \verb+TMBegin+ copies the value of {\tt glb} into its local
variable {\tt loc} and checks whether {\tt glb} is even. If so, the
transaction is started, and otherwise, the process attempts to start
again by rereading {\tt glb}. A \verb+TMRead+ operation succeeds as
long as {\tt glb} equals {\tt loc} (meaning no writes have occurred
since the transaction began), otherwise it aborts the current
transaction. The first execution of \verb+TMWrite+ attempts to
increment {\tt glb} using a {\tt cas} (compare-and-swap), which
atomically compares the first and second parameters, and sets the
first parameter to the third if the comparison succeeds. If the {\tt
  cas} attempt fails, a write by another transaction must have
occured, and hence, the current transaction aborts. Otherwise {\tt
  loc} is incremented (making its value odd) and the write is
performed. Note that because {\tt loc} becomes odd after the first
successful write, all successive writes that are part of the same
transaction will perform the write directly after testing {\tt loc} at
line $W1$. Further note that if the {\tt cas} succeeds, {\tt glb}
becomes odd, which prevents other transactions from starting, and
causes all concurrent live transactions still wanting to read or write
to abort. Thus a writing transaction that successfully updates {\tt
  glb} effectively locks shared memory. Operation \verb+TMEnd+ checks
to see if a write has occurred by testing whether {\tt loc} is odd. If
the test succeeds, {\tt glb} is set to {\tt loc+1}. At line {\tt E2},
{\tt loc} is guaranteed to be equal to {\tt glb}, and therefore this
update has the effect of incrementing {\tt glb} to an even value,
allowing other transactions to begin.

Our implementation uses a durably linearizable
\cite{DBLP:conf/fm/DerrickDDSW19,DBLP:conf/wdag/IzraelevitzMS16} set
or map data structure {\tt log}, such as the one described by Zuriel
et al.~\cite{DBLP:journals/pacmpl/ZurielFSCP19}. A durably
linearizable operation is guaranteed to take effect in persistent
memory prior to the operation returning. In Fig.~\ref{fig:DTML}, we
use use operations {\tt pinsert()}, {\tt pempty()} and {\tt pdelete()}
to stress that these operations are durably linearizable.

Our durable TML algorithm ({\sc dTML}) makes the following adaptations
to TML. Note the the operations build on a model of a crash that
resets volatile memory to persistent memory.
\begin{itemize} \setlength{\topsep}{0pt}
\item Within a write operation writing to address {\tt addr}, prior to
  modifying the value at {\tt addr}, we record the existing
  address-value pair in {\tt log}, provided that {\tt addr} does not
  already appear in the undo log (lines {\tt W4} and {\tt W5}). After
  updating the value (which updates the value of {\tt addr} in the
  volatile store), the update is flushed to persistent memory prior to
  the write operation returning (line {\tt W7}).
\item We introduce a recovery operation that checks for a non-empty
  log and transfers the logged values to persistent memory, undoing
  any writes that have completed (but not committed) before the crash
  occurred. Since a crash could occur during recovery, we transfer
  values from the undo log to persistent memory one at a time.
\item In the commit operation, we note that we distinguish a
  committing transaction as one with an odd value for {\tt loc}. For a
  writing transaction, the log must be cleared by setting it to the
  empty log (line {\tt E2}). Note that this is the point at which a
  writing transaction has definitely committed since any subsequent
  crash and recovery would no longer undo the writes of this
  transaction.
\end{itemize}


\begin{figure}[t]
\small
\noindent
\begin{minipage}[t]{0.45\columnwidth}
\begin{lstlisting}
TMBegin:                         
B1 do loc := glb                  
B2 until even(loc); 
   return ok;                    
                                         
TMRead(addr):                    
R1 val := *addr;                 
R2 if (glb = loc) then
     return val; 
   else return abort; 
       
Recover(): 
C1 while $\neg$ log.isEmpty() 
C2   SOME (addr, val).
        (addr, val) $\in$ log; 
C3   *addr := val ; 
C4   FLUSH(addr) ; 
C5   log.pdelete((addr, val));
C6 glb := 0
\end{lstlisting} 
\end{minipage}
\hfill 
\begin{minipage}[t]{0.52\columnwidth}
\begin{lstlisting}[escapeinside={(*}{*)}]
TMCommit:   
E1 if odd(loc) then 
E2   log.pempty(); 
E3   glb := loc + 1;  
   return commit; 

TMWrite(addr,val): 
W1 if even(loc) then 
W2   if $\neg$ cas(glb,loc,loc+1) then
        return abort;
W3   else loc++;
W4 if $\all$ v. (addr, v) $\notin$  log then
W5    log.pinsert((addr, *addr)); 
W6 *addr := val; 
W7 FLUSH(addr); 
   return ok;
\end{lstlisting} 
\end{minipage}
\caption{A durable Transactional Mutex Lock (\DTML). Initially: {\tt
    glb = 0, log = emptyLog()}. Line numbers for {\tt return}
  statements are omitted and we use {\tt *addr} for the value of {\tt
    addr} }
\label{fig:DTML}
\end{figure}











\section{Proving Durable Opacity}
\label{sec:prov-opac-meth}

Previous
works~\cite{DBLP:conf/forte/ArmstrongDD17,DBLP:conf/opodis/DohertyDDSW16,DBLP:journals/csur/DongolD15,DBLP:conf/forte/ArmstrongD17}
have considered proofs of opacity using the operational TMS2
specification~\cite{DGLM13}, which has been shown to guarantee opacity
\cite{LLM12}. The proofs show refinement of the implementation against
the TMS2 specification using either forward or backward
simulation. For durable opacity, we use a similar proof strategy. In
\refsec{sect:TMS2-def}, we develop the \DTMS operational
specification, a durable version of the TMS2 specification, that we
prove satisfies durable opacity. Then, in
\refsec{sec:verify-durable-opac}, we establish a simulation between
\DTML and \DTMS.

\subsection{Background: IOA, Refinement and Simulation}
We use Input/Output Automata (IOA) \cite{LT87} to model both the
implementation, \DTML, and the specification, \DTMS.
\begin{definition}
  An \emph{Input/Output Automaton (IOA)} is a labeled transition
  system $A$ with a set of \emph{states} $states(A)$, a set of
  \emph{actions} $acts(A)$, a set of \emph{start states}
  $start(A)\subseteq states(A)$, and a \emph{transition relation}
  $trans(A)\subseteq states(A)\times acts(A)\times states(A)$ (so that
  the actions label the transitions).
\end{definition}
The set $acts(A)$ is partitioned into input actions $input(A)$, output
actions $output(A)$ and internal actions $internal(A)$.  The internal
actions represent events of the system that are not visible to the
external environment. The input and output actions are externally
visible, representing the automaton's interactions with its
environment. Thus, we define the set of {\em external actions},
$external(A) = input(A)\cup output(A)$. We write
$s \stackrel{a}{\longrightarrow}_A s'$ iff $(s, a, s') \in trans(A)$.

An {\em execution} of an IOA $A$ is a
sequence $\sigma = s_0 a_0 s_1 a_1 s_2 \dots s_n a_n s_{n+1}$ of
alternating states and actions, 
such that $s_0 \in start(A)$ and 
for all states $s_i$,
$s_i \stackrel{a_{i}}{\longrightarrow}_A s_{i+1}$. A {\em reachable}
state of $A$ is a state appearing in an execution of $A$. An {\em
  invariant} of $A$ is any superset of the reachable states of $A$
(equivalently, any predicate satisfied by all reachable states of
$A$). A {\em trace} of $A$ is any sequence of (external) actions
obtained by projecting the external actions of any execution of
$A$. The set of traces of $A$, denoted $traces(A)$, represents $A$'s
externally visible behaviour.

For automata $C$ and $A$, we say that $C$ is a {\em refinement} of $A$
iff $traces(C) \subseteq traces(A)$.  We show that $C$ is a refinement
of $A$ by proving the existence of a {\em forward simulation}, which
enables one to check step correspondence between the transitions of
$C$ and those of $A$. The definition of forward simulation we use is
adapted from that of Lynch and Vaandrager \cite{LynchVaan95}.
\begin{definition}
\label{def:for-sim}
A \emph{forward simulation} from a concrete IOA $C$ to an abstract IOA
$A$ is a relation $R \subseteq states(C) \times states(A)$ such that 
each of the following holds.  \smallskip

\noindent \emph{Initialisation}. $\forall cs \in start(C).\ \exists as \in start(A).\ R(cs, as)$\smallskip


\noindent \emph{External step correspondence}.

\hfill $\begin{array}[t]{@{}l@{}}
  \forall cs \in reach(C), as \in reach(A), a \in external(C), cs' \in states(C).\ \\
  \qquad R(cs, as) \wedge cs \stackrel{a}{\longrightarrow}_C cs' \imp \exists as' \in states(A).\ R(cs', as') \wedge as \stackrel{a}{\longrightarrow}_A as'
\end{array}
$ \hfill {}\smallskip



\noindent \emph{Internal step correspondence}.

\hfill
$\begin{array}[t]{@{}l@{}}
   \forall cs \in reach(C), as \in reach(A), a \in internal(C), cs' \in states(C).\ \\
   \qquad R(cs, as) \wedge cs \stackrel{a}{\longrightarrow}_C cs' \imp \\
   \qquad \begin{array}[t]{@{}l@{}}
     R(cs', as) \lor  
     \exists a' \in internal(A), as' \in states(A).\ R(cs', as') \wedge as \stackrel{a'}{\longrightarrow}_A as'
   \end{array}
 \end{array}
 $\hfill{}


  
\end{definition}
Forward simulation is {\em sound} in the sense that if there is a
forward simulation between $A$ and $C$, then $C$ refines $A$
\cite{LynchVaan95,MullerIOA1998}.

\subsection{IOA for \DTML}
\label{sect:TML-def}

We now provide the IOA model of \DTML. The state of \DTML
(\reffig{fig:DTML}) comprises global (shared) variables $glb \in \nat$
(modelling {\tt glb} in volatile memory); $log \in L \pfun V$, where
$\pfun$ denotes a partial function (modelling {\tt log} in persistent
memory); the volatile memory store $vstore \in L \to V$; and the
persistent memory store $pstore \in L \to V$. We also use the
following transaction-local variables: the program counter
$pc \in T \to PC$, $loc \in T \fun \nat$, the input address
$addr \in T \fun V$, the input value $val \in T \fun V$. We also make
use of an auxiliary variable $writer$ whose value is either the
transaction id of the current writing transaction (if one exists), or
$None$ (if no writing transaction is currently running).

Execution of the program is modelled by defining an IOA transition for
each atomic step of \reffig{fig:DTML}, using the values of $pc_t$ (for
transaction $t$) to model control flow. Each action that starts a new
operation or returns from a completed operation is an external
action. The crash action is also external. All other actions
(including flush and recovery) are internal actions.

To model system behaviours (crash, system flush and recovery), we
reserve a special transaction id $syst$. A crash and system flush is
always enabled, and hence can always be selected for
execution. Recovery steps are enabled after a crash has taken place
and are only executed by $syst$.  The effect of a flush is to copy the
value of the address being flushed from $vstore$ to $pstore$. Note
that a flush can also be executed at specific program locations. In
\DTML, a flush of {\tt addr} occurs at lines {\tt W7} and {\tt
  C5}. The effect of a crash is to perform the following:
\begin{itemize}
\item set the
volatile store to the persistent store (since the volatile store is lost),  
\item set the program counters of all \emph{in-flight transactions}
  (i.e., transactions that have started but not yet completed) to
  $aborted$ to ensure that these transaction identifiers are not
  reused after the system is rebooted, and
\item set the status of $syst$ to {\tt C1} to model that a recovery is now
  in progress.
\end{itemize}
In our model, it is possible for a system to crash during
recovery. However, no new transaction may start until after the
recovery process has completed.

\subsection{IOA for \DTMS}
\label{sect:TMS2-def}

In this section, we describe the \DTMS specification, an operational
model that ensures durable opacity, which is based on
TMS2~\cite{DGLM13}. TMS2 itself has been shown to strictly imply
opacity \cite{LLM12}, and hence has been widely used as an
intermediate specification in the verification of transactional memory
implementations~\cite{DBLP:conf/forte/ArmstrongDD17,DDSTW15,DBLP:conf/forte/ArmstrongD17,DBLP:conf/opodis/DohertyDDSW16}.

We let $f \oplus g$ denote functional override of $f$ by $g$, e.g.,
$f \oplus \{x \mapsto u, y \mapsto v\} = \lambda k.\ {\bf if}\ k = x\
{\bf then}\ u \ {\bf else if}\ k = y \ {\bf then}\ v \ {\bf else}\
f(k)$.

\begin{figure}
{\small
{\bf State variables:}\\
$mems : seq (L \to V)$, initially satisfying $\dom mems = \{0\}$ and $initMem(mems(0))$\\
$pc_t : PCVal$, for each $t\in T$, initially $pc_t=\pcNotStarted$ for all $t\in T$\\
$beginIdx_t : \nat$ for each $t\in T$, unconstrained initially\\
$rdSet_t: L \pfun V$, initially empty for all $t\in T$\\
$wrSet_t: L \pfun V$, initially empty for all $t\in T$\\
\\
{\bf Transition relation:} \\[1em]
\bgroup
\setlength{\tabcolsep}{1em}
\newlength{\myrowsep}
\setlength{\myrowsep}{4em} 
\newlength{\myrowsepa}
\setlength{\myrowsepa}{3em} 
\newlength{\myrowsepb}
\setlength{\myrowsepb}{5.5em} 
\begin{tabular}{@{}l@{\qquad\quad}l@{}}
\action{\beginInv{t}}
{pc_t = \pcNotStarted}
{pc_t := \pcBeginPending\\&
 beginIdx_t := len(mems) - 1}
&
\action{\beginResp{t}}
{pc_t = \pcBeginPending}
{pc_t := \pcReady}
\\[\myrowsep]
\action{\readInv{t}{l}}
{pc_t = \pcReady}
{pc_t := \pcDoRead(l)}
&
\action{\readResp{t}{v}}
{pc_t = \pcReadResp(v)}
{pc_t := \pcReady}
\\[\myrowsepa]
\action{\writeInv{t}{l, v}}
{pc_t = \pcReady}
{pc_t := \pcDoWrite(l, v)}
&
\action{\writeResp{t}}
{pc_t = \pcWriteResp}
{pc_t := \pcReady}
\\[\myrowsepa]
\action{\commitInv{t}}
{pc_t = \pcReady}
{pc_t := \pcDoCommit}
&
\action{\commitResp{t}}
{pc_t = \pcCommitResp}
{pc_t := \pcCommitted}
\\[\myrowsepa] 
\action{\abortResp{t}}
{pc_t \notin \{\pcNotStarted, \pcReady, \\
  & \pcCommitResp, \pcCommitted, \pcAborted\}}
{pc_t := \pcAborted}
  &
    \action{\doWrite{t}{l, v}}
    {pc_t = \pcDoWrite(l, v)}
    {pc_t := \pcWriteResp\\&
  wrSet_t := wrSet_t\oplus \{l \to v\}
  }
  \\[\myrowsep]
\action{\doCommitReadOnly{t}{n}}
{pc_t = \pcDoCommit\\&
 \dom(wrSet_t) = \emptyset\\&
 validIdx(t, n)}
{pc_t := \pcCommitResp}
&
\action{\doCommitWriter{t}}
{pc_t = \pcDoCommit\\&
 rdSet_t \subseteq last(mems)}
{pc_t := \pcCommitResp\\&
 mems := mems \cat (last(mems) \oplus wrSet_t)} 
\\[\myrowsepb]
\action{\doRead{t}{l, n}}
{pc_t = \pcDoRead(l)\\&
 l\in \dom(wrSet_t) \vee validIdx(t, n)}
{\sif\ l \in \dom(wrSet_t)\ \sthen\\&
 \ \ \ pc_t := \pcReadResp(wrSet_t(l))\\&
 \selse\ v := mems(n)(l)\\&
 \ \ \ \ \ \ \ \  pc_t := \pcReadResp(v)\\&
 \ \ \ \ \ \ \ \ rdSet_t := rdSet_t\oplus \{l \to v\}}
&
\begin{tabular}[t]{@{}l@{}}
  \action{\crashed_{t}}
  {{\it t = syst}}
  {pc := \lambda t: \Tids. \\
  & \qquad \sif\ t \neq syst \wedge {} \\
  & \qquad \quad pc_t \notin \{notStarted, committed\} \\
  & \qquad \sthen\ aborted \\
  & mems = \langle last(mems) \rangle}
\end{tabular}
\end{tabular}
\egroup
\smallskip

{\bf where}
$\begin{array}[t]{r@{~~}c@{~~}l}
  validIdx(t, n) &\sdef & beginIdx_t \leq n < len(mems) \wedge
                          rdSet_t \subseteq mems(n) 
\end{array}
$
}
\caption{The state space and transition relation of \DTMS, which
  extends TMS2 with a crash event} \label{fig:tms2}
\end{figure}

Formally, \DTMS is specified by the IOA in \reffig{fig:tms2}, which
describes the required ordering constraints, memory semantics and
prefix properties. We assume a set $L$ of locations and a set $V$ of
values.  Thus, a memory is modelled by a function of type $L \to
V$.   A key feature of \DTMS (like TMS2) is that it keeps track of a
{\em sequence} of memory states, one for each committed writing
transaction. This makes it simpler to determine whether reads are
consistent with previously committed write operations. Each committing
transaction containing at least one write adds a new memory version to
the end of the memory sequence. However, unlike TMS2, following
\cite{DBLP:conf/fm/DerrickDDSW19}, the memory state is considered to
be the persistent memory state. Interestingly, the volatile memory
state need not be modelled.

The state space of \DTMS has several components. The first, $mems$ is
the sequence of {\em memory} states. For each transaction $t$ there is
a program counter variable $pc_t$, which ranges over a set of {\em
  program counter values}, which are used to ensure that each
transaction is well-formed, and to ensure that each transactional
operation takes effect between its invocation and response. There is
also a a {\em begin index} variable $beginIdx_t$, that is set to the
index of the most recent memory version when the transaction
begins. This variable is critical to ensuring the real-time ordering
property between transactions.  Finally, there is a {\em read set},
$rdSet_t$, and a {\em write set}, $wrSet_t$, which record the values
that the transaction has read and written during its execution,
respectively. 

The read set is used to determine whether the values that have been
read by the transaction are consistent with the same version of memory
(using $validIdx$). The write set, on the other hand, is required
because writes in \DTMS are modelled using {\em deferred update}
semantics: writes are recorded in the transaction's write set, but are
not published to any shared state until the transaction commits.


The $crash$ action models both a crash and a recovery. We require that
it is executed by the system thread $syst$. It sets the program
counter of every in-flight transaction to $aborted$, which prevents
these transactions from performing any further actions in the era
following the crash (for the generated history). Note that since
transaction identifiers are not reused, the program counters of
completed transactions need not be set to any special value (e.g.,
$crashed$) as with durable
linearizability~\cite{DBLP:conf/fm/DerrickDDSW19}.  Moreover, after
restarting, it must not be possible for any new transaction to
interact with memory states prior to the crash. We therefore reset the
memory sequence to be a singleton sequence containing the last memory
state prior to the crash.

The following theorem ensures that \DTMS can be used as an
intermediate specification in our proof method. 

\begin{theorem}
\label{theor:dtsm-sound}
Each trace of\, \DTMS is durably opaque.
\end{theorem}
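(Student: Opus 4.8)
The plan is to reduce durable opacity of \DTMS to ordinary opacity of \TMS, which is already known to produce only opaque traces \cite{LLM12}. By \refdef{def:duropaque} it suffices to show, for every trace $h$ of \DTMS, that (i) $h$ is durably well-formed and (ii) $\ops(h)$ is opaque. I would obtain (ii) from the inclusion of $\ops(\mathit{traces}(\DTMS))$ in $\mathit{traces}(\TMS)$: reclassifying the (external) crash action of \DTMS as hidden makes its traces exactly the crash-free projections $\ops(h)$, so a trace refinement of this crash-hidden automaton into \TMS immediately gives opacity of every $\ops(h)$. By soundness of forward simulation \cite{LynchVaan95}, I would establish this refinement by exhibiting a forward simulation $R$ from the crash-hidden \DTMS to \TMS.

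The relation $R$ has to reconcile the one structural difference between the two automata: \TMS keeps the full, monotonically growing sequence of committed memory versions, which I write $M$, whereas \DTMS discards all but the last version at each crash; write $m$ for its sequence. I would carry an offset $\delta$, namely the number of versions accumulated in the completed eras, and require $m(i) = M(i+\delta)$ for every index $i$, together with the matching shift of begin indices, so that the \TMS value of $beginIdx_t$ is the \DTMS value plus the offset $\delta_t$ in force when $t$ began; since identifiers are confined to a single era, $\delta_t$ is well defined. For every non-crash transition the correspondence is then step-for-step and essentially the identity after translating indices through $\delta$: the external invocation/response actions are matched by the same \TMS action, and the internal \texttt{DoRead}, \texttt{DoWrite}, \texttt{DoCommitReadOnly} and \texttt{DoCommitWriter} transitions by their \TMS counterparts, with the $validIdx$ predicate preserved because $rdSet_t \subseteq mems(n)$ is invariant under the shift.

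The crux is the crash transition, which (being hidden, hence internal) I would match by stuttering on the \TMS side, i.e.\ the $R(cs',as)$ disjunct of \refdef{def:for-sim}. I must show $R$ survives the three simultaneous effects of a crash: the reset $mems := \langle last(mems)\rangle$, the abort of every in-flight transaction, and the advance of the offset. The reset is absorbed by raising $\delta$ to $len(M)-1$, so that the single surviving \DTMS version still equals $M(\delta)$; this is exactly where I need the invariant that no post-crash transaction can reach a pre-crash version, enforced operationally by $beginIdx_t$ being set to $len(mems)-1$ at begin and relationally by the new offset. The aborts are harmless for trace inclusion: a transaction that the crash drives to $aborted$ emits no further events, so I may leave its \TMS counterpart live, which forces $R$ to be permissive here but is sound because such a transaction never commits and its private $rdSet_t$ and $wrSet_t$ never reach shared state. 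I expect discharging this case, together with the supporting \DTMS invariants (monotonicity of the offset, confinement of identifiers to one era, and agreement of begin indices with $\delta$), to be the main effort.

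Finally, durable well-formedness (i) follows from the program-counter discipline alone: each $pc_t$ runs once from $\pcNotStarted$ to a terminal value and can never be re-armed, so every identifier contributes a single, properly begin-initiated transaction to $\ops(h)$ with at most one live instance; and since a crash sends every in-flight transaction to $aborted$ while freezing completed ones, no identifier emits events in two eras. Together with the opacity of $\ops(h)$ obtained above, \refdef{def:duropaque} then yields that every trace of \DTMS is durably opaque.
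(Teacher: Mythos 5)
Your proposal matches the paper's own proof in both structure and detail: the paper also reduces to the known opacity of \TMS via the inclusion $\{ops(h) \mid h \in traces(\DTMS)\} \subseteq traces(\TMS)$, established by what it calls a \emph{weak simulation} (a forward simulation in which the crash action is treated as internal and matched by stuttering), with an existentially quantified index $i$ playing exactly the role of your offset $\delta$ ($cs.mems(n) = as.mems(n+i)$, begin indices shifted by $i$, the index advanced to $len(as.mems)-1$ at a crash, and the local relation made permissive for crash-aborted transactions). The argument is correct and essentially identical to the paper's.
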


\begin{proof}[Sketch] First we recall that $\TMS$ is exactly the same
  as the automaton in \reffig{fig:tms2}, but without a crash
  operation. The proof proceeds by showing that for any history
  $h \in traces(\DTMS)$, we have that $ops(h) \in traces(\TMS)$. Then
  since $ops(h)$ is opaque, we have that $h$ is durably opaque. We
  establish a formal relationship between $h$ and $ops(h)$ by
  establishing a \emph{weak simulation} between $\DTMS$ and $\TMS$
  such that $\{ops(h) | h \in traces(\DTMS)\} \subseteq
  traces(\TMS)$. The simulation is weak since the external
  \emph{crash} action in \DTMS has no matching counterpart in \TMS.

  The simulation relation we use captures the following.  Any
  transaction $t$ of $\DTMS$ that is aborted due to a crash will set
  $pc_t$ to $aborted$ without executing $\abortResp{t}$. This
  difference can easily be compensated by the simulation relation. A
  second difference is that $mems$ is reset to $last(mems)$ in $\DTMS$
  when a crash occurs, and hence there is a mismatch between $mems$ in
  $\DTMS$ and in $\TMS$. Let $ds$ be a state of \DTMS and $as$ a state
  of \TMS. To compensate for the difference between $ds.mems$ and
  $as.mems$, we introduce an auxiliary variable ``$allMems$'' to $ds$
  that records memories corresponding to all committed writing
  transactions in $\DTMS$. We have the property that $ds.mems$ of
  \DTMS is a suffix of $ds.allMems$ and that $ds.allMems = as.mems$.
\end{proof}


\section{Durable Opacity of\, \DTML}
\label{sec:verify-durable-opac}

We now describe the simulation relation used in the Isabelle
proof.\footnote{All Isabelle theory files related to this proof are
  provided as ancillary files to this submission.}

Our simulation relation is divided into two relations: a {\em global
  relation} $globalRel$, and a transactional relation $txnRel$. The
global relation describes how the shared states of the two automata are
related, and the transaction relation specifies the relationship
between the state of each transaction in the concrete automaton, and
that of the transaction in the abstract automaton.  The simulation
relation itself is then given by:
\begin{align*}
simRel(cs, as) = globalRel(cs, as) \wedge \forall t\in T \st txnRel(cs, as, t)
\end{align*}

We first describe $globalRel$, which assumes the following auxiliary
definitions where $cs$ is the concrete state (of \DTML) and $as$ is
the abstract state (of \DTMS). These definitions are used to
compensate for the fact that the commit of a writing transaction in
the \DTML algorithm takes effect (i.e., linearizes) at line {\tt E2}
when the log is set to empty.
\begin{align*}
  writes(cs, as) & = {\bf if}\ cs.writer = t \wedge
  pc_{t} \neq E3\ {\bf then}\ as.wrSet_t\ {\bf else}\
  \emptyset \\
  logical\_glb(cs) & = {\bf if}\ cs.writer = t \wedge
                     pc_{t} = E3\  {\bf then}\ cs.glb + 1 \ {\bf else}\ cs.glb \\
  write\_count(cs) & = \left\lfloor\frac{logical\_glb(cs)}{2}\right\rfloor
\end{align*}
Function $writes(cs, as)$ returns the (abstract) write set of the
writing transaction. This is the write set of the writing transaction,
$t$, in the abstract state $as$ provided $t$ hasn't already linearized
its commit operation, and is the empty set otherwise. Function
$logical\_glb(cs)$ compensates for a lagging value of $glb$ after a
writing transaction's commit operation is linearized. Namely, it
returns the $glb$ incremented by $1$ if a writer is already at
$E3$. Finally, $write\_count(cs)$ is used to determine the number of
committed writing transactions in $cs$ since the most recent crash
since $cs.glb$ is initially $0$ and reset to $0$ by the recovery
operation, and moreover, $cs.glb$ is incremented twice by each writing
transaction: once at line {\tt W2} and again at line {\tt E2} when the
writing transaction commits.

Relation $globalRel$ comprises three main parts.  We assume a program
counter value $RecIdle$ which is true for $pc_{syst}$ iff $syst$ is
not executing the recovery procedure. 
\begin{align}
  \omit \rlap{$globalRel(cs, as)$  = } \nonumber \\
  & (pc_{syst} = RecIdle \imp cs.vstore = (last (as.mems) \oplus writes (cs, as)) \wedge {} \label{grel1}\\
  & \qquad \qquad \qquad \qquad write\_count(cs) + 1 = length (as.mems))) \wedge {} \label{grel2} \\
  &  (cs.vstore \oplus cs.log) = last(mems(as))\wedge {} \label{grel3} \\
  & \forall t. t \neq syst \wedge cs.pc_t = NotStarted \imp as.pc_t = NotStarted  \label{grel4}
\end{align} 
Conditions \eqref{grel1} and \eqref{grel2} assume that a recovery
procedure is not in progress. By \eqref{grel1}, the concrete volatile
store is the last memory in $as.mems$ overwritten with the write set
of an in-flight writing transaction that has not linearized its commit
operation. By~\eqref{grel2}, the number of memories recorded in the
abstract state (since the last crash) is equal to
$write\_count(cs) + 1$. 
By \eqref{grel3}, the last abstract (persistent) store can by
calculated from $cs.vstore$ by overriding it with the mappings in
log. Note that this is equivalent to undoing all uncommitted
transactional writes. Finally, \eqref{grel4} ensures that every
identifier for a transaction that has not started at the concrete
level also has not started at the abstract level.

We turn now to $txnRel$. Its specification is very similar to the
specification of $txnRel$ in the proof of
TML~\cite{DBLP:journals/fac/DerrickDDSTW18}. Therefore, we only
provide a brief sketch below; an interested reader may
consult~\cite{DBLP:journals/fac/DerrickDDSTW18} for further
details. Part of $txnRel$ maps concrete program counters to their
abstract counterparts, which enables steps of the concrete program to
be matched with abstract steps. For example, concrete $pc$ values {\tt
  W1}, {\tt W2}, \dots, {\tt W6} correspond to abstract $pc$ value
$\pcDoWrite(cs.addr_t,cs.val_t)$, whereas {\tt W7} corresponds to
$\pcWriteResp$, indicating that, in our proof, execution of line {\tt
  W6} corresponds to the execution of an abstract
$\doWrite{t}{cs.addr_t,cs.val_t}$ operation. Moreover, as in the
proof of TML~\cite{DBLP:journals/fac/DerrickDDSTW18}, a set of
assertions are introduced to establish
$as.validIdx(t, write\_count(cs))$ for all in-flight
transactions $t$, which ensures that each transactional read and write
is valid with respect to some memory snapshot.




Relation $txnRel$ must also provide enough information to enable
linearization of a commit operation against the correct abstract
step. Note that \DTMS distinguishes between read-only and writing
transactions by checking emptiness of the write set of the committing
transaction. To handle this, we exploit the fact that in \DTML,
writing transactions have an odd $loc$ value if the {\tt cas} at line
{\tt W2} is successful and {\tt loc} is incremented at {\tt W3},
indicating that a writing transaction is in
progress.  

Finally, $txnRel$ must ensure that the recovery operation is such that the
volatile store matches the last abstract store in $mems$ prior to the
crash. To achieve this, we require that $length(as.mems) = 1$ when
$syst$ is executing the recovery procedure, and the volatile store for
the address being flushed at {\tt C3} matches the abstract state
before the crash, i.e.,
$cs.vstore(cs.addr_t) = ((as.mems)(0)) (cs.addr_t)$. Since the
recovery loop only terminates after the log is emptied, this ensures
that the concrete memory state is consistent with the abstract memory
prior to executing any transactions after a crash has occurred. 

In order to prove that our simulation relation is maintained
by each step of the algorithm, we must use
certain invariants of the \DTML model. These invariants are
similar to the corresponding invariants used in a proof of the
original TML algorithm for the conventional volatile RAM model
(see \cite{DBLP:journals/fac/DerrickDDSTW18} for details).
For example, our invariants imply that
there is at most one writing transaction, and there is no such transaction
when $glb$ is even. The main additional invariant that we use for 
\DTML constrains the possible differences between volatile and persistent
memory: volatile and persistent memory are identical except for any
location that has been written by a writer or by the recovery procedure
but not yet flushed. This simple invariant combined with the global relation
is enough to prove that the memory state after each crash is correct.
Our \DTML invariants have been verified in Isabelle, and can be
found in the Isabelle files.

\section{Related Work}
Although there is existing research on extending the definition of
linearizability to durable systems, there is comparatively less work
on extending other notions of transactional memory correctness such
as, but not limited to, opacity to durable systems.  Various systems
attempt to achieve atomic durability, transform general objects to
persistent objects and provide a secure interface of persistent
memory. The above goals usually require the use of logging which can
be software or hardware based. Raad et al have proposed a notion of
durable serialisability under relaxed memory
\cite{DBLP:journals/pacmpl/RaadWV19}, but this model does not handle
aborted transactions.

ATLAS \cite{chakrabarti2014atlas} is a software system that provides
durability semantics for 
NVRAM with lock-base multithreaded code. 
The system ensures that the outermost critical sections, which are
protected by one or more mutexes, are failure-atomic by identifying
Failure Atomic Sections (FASEs). These sections ensure that, if at
least one update that occurs to a persistent location inside a FASE is
durable, then all the updates inside the session are
durable. Furthermore, like $\DTML$, ATLAS keeps an persistent undo
log, that tracks the synchronisation operations and persistent stores,
and allows the recovery of rollback FASEs that were interrupted by
crashes. 

Koburn et al. \cite{coburn2011nv} integrate persistent objects into
conventional programs, and furthermore seek to prevent safety bugs
that occur in predominantly persistent memory models, such as multiple
frees, pointer errors, and locking errors. This is done by
implementing NV-heaps, an interface to the NVRAM 
based on ACID transactions that guarantees safety and provides
reasoning about the order 
in which changes to the data structures should become
permanent. 
NV-heaps only handle updates to persistent memory inside transactions
and critical sections. Other systems based on persistent ACID
transactions include Mnemosyne \cite{volos2011mnemosyne}, Stasis
\cite{sears2006stasis} and BerkeleyDB \cite{olson1999berkeley}. 


Ben-David et al. \cite{ben2019delay} developed a system that can
transform programs that consist of read, write and CAS operations in
shared memory, to persistent memory. The system aims to create
concurrent algorithms that guarantee consistency after a fault. This
is done by introducing checkpoints, which record the current state of the
execution and from which the execution can be continued after a
crash. Two consecutive checkpoints form a \emph{capsule}, and if a
crash occurs inside a capsule, program execution is continued from the
previous capsule boundary. We have not applied this technique to
develop \DTML, but it would be interesting to develop and optimise
capsules in an STM context.


Mnemosyne \cite{volos2011mnemosyne} provides a low-level interface to
persistent memory with high-level transactions 
based on TinySTM \cite{felber2008dynamic} and a redo log that is
purposely chosen to reduce ordering constraints.  
The log is flushed 
at the commit of each transaction. As a result, the memory locations
that are written to by the transaction remain unmodified until
commit. Each read operation checks whether data has been modified and
if so, returns the buffered value instead of the value from the
memory. The size of the log increases proportionally to the size of
the transaction, potentially making the checking time consuming.

Hardware based durable transactional memory has also been
proposed~\cite{joshi2018dhtm}  
with hardware support for redo
logging~\cite{joshi2017atom}. 
Other indicative hardware systems help implement atomic durability
either by performing accelerated ordering or by performing the logging
operation are
\cite{nalli2017analysis,lu2014loose}. 


\section{Conclusions}
\label{sec:conc}

In this paper we have defined durable opacity, a new correctness
condition for STMs, inspired by durable
linearizability~\cite{DBLP:conf/wdag/IzraelevitzMS16} for concurrent
objects. The condition assumes a history with crashes such that
in-flight transactions are aborted (i.e., do not continue) after a
crash takes place, and simply requires that the history satisfies
opacity~\cite{2010Guerraoui,GuerraouiK08} after the crashes are
removed. This is a strong notion of correctness but ensures safety for
STMs in the same way that durable
linearizability~\cite{DBLP:conf/wdag/IzraelevitzMS16} ensures safety
for concurrent objects. It is already known that TMS1~\cite{DGLM13},
which is a weaker condition than opacity~\cite{LLM12} is sufficient
for contextual refinement~\cite{DBLP:conf/wdag/AttiyaGHR14}; therefore
we conjecture that durable opacity can provide similar guarantees in a
non-volatile context.  For concurrent objects, more relaxed notions
such as buffered durable
linearizability~\cite{DBLP:conf/wdag/IzraelevitzMS16} have also been
proposed, which requires causally related operations to be committed
in order, but real-time order need not be maintained. Such notions
could also be considered in a transactional memory
setting~\cite{DBLP:journals/pacmpl/DongolJR18}, but the precise
specification of such a condition lies outside the scope of this
paper.

To verify durable opacity, we have developed \DTMS, an operational
characterisation that extends the TMS2 specification with a crash
operation. We establish that all traces of \DTMS are durably opaque,
which makes it possible to prove durable opacity by showing refinement
between an implementation and \DTMS. We develop a durably opaque
example implementation, \DTML, which extends
TML~\cite{DalessandroDSSS10} with a persistent undo log, and
associated modifications such as the introduction of a recovery
operation. 
Finally, we prove durable opacity of
\DTML by establishing a refinement between it and \DTMS. This proof
has been fully mechanised in Isabelle.

Our focus has been on the formalisation of durable opacity and the
development of an example algorithm and verification technique. Future
work will consider alternative implementations of the algorithm, e.g.,
using a persistent set \cite{DBLP:journals/pacmpl/ZurielFSCP19}, or
thread-local undo logs~\cite{izraelevitz2016failure}.  develop and
implement a logging mechanism based on undo and redo log properties
named JUSTDO logging. This mechanism aims to reduce the memory size of
log entries while preserving data integrity after crash occurrences.
Unlike optimistic transactions \cite{chakrabarti2014atlas}, JUSTDO
logging resumes the execution of interrupted FASEs to their last store
instruction, and then executes them until completion. A small log is
maintained for each thread, that records its most recent store within
a FASE, simplifying the log management and reduce the memory
requirements. Future work will also consider weakly consistent memory
models building on existing works integrating persistency semantics
with hardware memory
models~\cite{DBLP:journals/pacmpl/RaadWV19,DBLP:journals/pacmpl/RaadV18}.


\bibliographystyle{splncs04}
\bibliography{references,references2}

\newpage\appendix

\newcommand{\slen}{\mathit{len}}

\section{Soundness of\, $\DTMS$}

We now prove Theorem \ref{theor:dtsm-sound}.
In Lemma \ref{app:lem:wtrace-incl}, we show
that if $h \in traces(\DTMS)$ then $ops(h) \in traces(\TMS)$.
It has already been shown \cite{LLM12} that every trace
of\, $\TMS$ is opaque. Putting these two facts
together, we have that for every trace $h \in traces(\DTMS)$,
$ops(h) \in traces(\TMS)$ is opaque, and so $h$ is durably opaque.

$\TMS$ is fully described in \cite{DGLM13}. We do not present
the automaton explicitly here, but it is {\em precisely}
the the $\DTML$ automaton with the crash action removed.

We prove Lemma \ref{app:lem:wtrace-incl} using an inductive
technique very similar to forward simulation. We exhibit a
relation $R \subseteq states(\DTMS) \times states(\TMS)$,
which we call a weak simulation,
satisfying the properties given in the following definition.
The only difference between this definition and the standard notion 
in Definition\ref{def:for-sim} is that we treat crash events as internal events.
\begin{definition}[Weak simulation]
\label{def:weak-sim}
A \emph{forward simulation} from a concrete IOA $C$ to an abstract IOA
$A$ is a relation $R \subseteq states(C) \times states(A)$ such that 
each of the following holds.  \smallskip

\noindent \emph{Initialisation}. $\forall cs \in start(C).\ \exists as \in start(A).\ R(cs, as)$\smallskip


\noindent \emph{External step correspondence}.

\hfill $\begin{array}[t]{@{}l@{}}
  \forall cs \in reach(C), as \in reach(A), a \in external(C) - \{Crash\}, cs' \in states(C).\ \\
  \qquad R(cs, as) \wedge cs \stackrel{a}{\longrightarrow}_C cs' \imp \exists as' \in states(A).\ R(cs', as') \wedge as \stackrel{a}{\longrightarrow}_A as'
\end{array}
$ \hfill {}\smallskip
\noindent \emph{Internal step correspondence}.

\hfill
$\begin{array}[t]{@{}l@{}}
   \forall cs \in reach(C), as \in reach(A), a \in internal(C) \union \{Crash\}, cs' \in states(C).\ \\
   \qquad R(cs, as) \wedge cs \stackrel{a}{\longrightarrow}_C cs' \imp \\
   \qquad \begin{array}[t]{@{}l@{}}
     R(cs', as) \lor  
     \exists a' \in internal(A), as' \in states(A).\ R(cs', as') \wedge as \stackrel{a'}{\longrightarrow}_A as'
   \end{array}
 \end{array}
 $\hfill{}
\end{definition}
Weak simulations
satisfy the following property.
\begin{lemma}
\label{app:lem:weak-sim-prop}
For any relation $R \subseteq states(\DTMS) \times states(\TMS)$,
if $R$ is a weak simulation from $\DTMS$ to $\TMS$ then
for every $h \in traces(\DTMS)$, $ops(h) \in \TMS$.
\end{lemma}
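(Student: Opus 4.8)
The plan is to run the standard soundness argument for forward simulation, but carefully account for the reclassification of the $Crash$ action as internal. Since $\TMS$ is exactly $\DTMS$ with the crash action deleted, we have $external(\TMS) = external(\DTMS) \setminus \{Crash\}$; the entire point of Definition~\ref{def:weak-sim} is that the simulation preserves precisely these shared external actions while absorbing $Crash$ along with the genuine internal actions. So I would fix $h \in traces(\DTMS)$, pick an execution $\alpha = ds_0\, a_0\, ds_1\, a_1 \cdots$ of $\DTMS$ whose external projection is $h$, and construct, by induction on prefixes of $\alpha$, a corresponding execution $\beta$ of $\TMS$ whose external projection is $ops(h)$.

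First I would establish the base case using \emph{Initialisation}: for the start state $ds_0$ there is a start state $as_0$ of $\TMS$ with $R(ds_0, as_0)$, and the empty abstract trace agrees with $ops$ of the empty concrete trace. The inductive invariant I would carry is: after processing $ds_0\, a_0 \cdots ds_k$, the constructed $\beta$ ends in a reachable state $as_k$ with $R(ds_k, as_k)$, and the external projection of $\beta$ equals $ops$ of the external projection of $ds_0 \cdots a_{k-1}\, ds_k$.

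For the inductive step on action $a_k$ I would split on its classification. If $a_k \in external(\DTMS) \setminus \{Crash\}$, \emph{External step correspondence} supplies an abstract step $as_k \xrightarrow{a_k} as_{k+1}$ with $R(ds_{k+1}, as_{k+1})$; appending this to $\beta$ adds $a_k$ to both the abstract trace and to $ops$ of the concrete trace, so the invariant is maintained. If $a_k \in internal(\DTMS) \cup \{Crash\}$ --- and this is exactly where a crash is handled like an internal action --- \emph{Internal step correspondence} gives two options: either $R(ds_{k+1}, as_k)$, in which case $\beta$ is left unchanged, or there is an internal abstract step $as_k \xrightarrow{a'} as_{k+1}$ with $a' \in internal(\TMS)$ and $R(ds_{k+1}, as_{k+1})$, which I append to $\beta$. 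In either sub-case no $external(\TMS)$ action is produced, and since $a_k$ is either internal or a crash it is deleted by $ops$; hence both sides of the trace invariant are unchanged. Throughout, $ds_k$ is reachable because it lies on $\alpha$, and $as_k$ is reachable because it lies on the $\TMS$-execution $\beta$ built so far, so the reachability hypotheses of Definition~\ref{def:weak-sim} are always available.

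Taking the limit over all prefixes yields an execution $\beta$ of $\TMS$ whose external projection is $ops(h)$, establishing $ops(h) \in traces(\TMS)$. The argument contains no real mathematical difficulty; the one point requiring care is the bookkeeping that keeps the two reachability invariants in force simultaneously while showing that every $Crash$ event is at once absorbed by the internal step correspondence and removed by $ops$, so that the abstract trace tracks $ops(h)$ rather than $h$ itself. A secondary technicality is the treatment of infinite executions, which is dispatched by observing that the construction is prefix-monotone and therefore passes to the limit.
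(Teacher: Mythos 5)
Your proposal is correct and follows essentially the same route as the paper: an induction on the length of executions of \DTMS that builds a matching \TMS execution whose trace equals $ops$ of the concrete trace, with $R$ relating the final states at every step and the $Crash$ action absorbed via the internal step correspondence. The paper states this construction in one paragraph and leaves the details to the reader; you have simply supplied the standard bookkeeping (base case, case split on action classification, reachability invariants, passage to the limit) that the paper omits.
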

\begin{proof}
The proof is a simple induction on the length of the executions
of $\DTMS$. More specifically, for each execution of $\DTMS$ $e$, we
inductively construct an execution $e'$ of $\TMS$ such that
$ops(trace(e)) = trace(e')$ which is sufficient. This construction is
entirely standard, and ensures at each step that the final states of each
execution are related by $R$. This guarantees (given the definition of
weak simulation) that we can always extend $e'$ appropriately.
We leave the details to the interested reader.
\end{proof}

We turn now to our main lemma.
\begin{lemma}
\label{app:lem:wtrace-incl}
For every trace $h \in traces(\DTMS)$, $ops(h) \in traces(\TMS)$.
\end{lemma}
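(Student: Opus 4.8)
The plan is to apply Lemma~\ref{app:lem:weak-sim-prop}: it suffices to exhibit a relation $R \subseteq states(\DTMS) \times states(\TMS)$ and to verify that it is a weak simulation in the sense of Definition~\ref{def:weak-sim}, since that lemma then immediately yields $ops(h) \in traces(\TMS)$ for every $h \in traces(\DTMS)$. The whole difficulty is therefore concentrated in designing $R$ so that the two discrepancies between the automata — the silent crash-abort of in-flight transactions, and the truncation of $mems$ at each crash — are absorbed.

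First I would instrument the \DTMS state with the auxiliary variable $allMems$ flagged in the sketch of Theorem~\ref{theor:dtsm-sound}: $allMems$ is initialised exactly like $mems$, is extended by $last(allMems) \oplus wrSet_t$ whenever $mems$ is extended by a \doCommitWriter{t} step, and, crucially, is left untouched by the crash action. A routine invariant then gives that $ds.mems$ is always a suffix of $ds.allMems$ with $last(ds.mems) = last(ds.allMems)$. With this in hand I would define $R(ds, as)$ to require $as.mems = ds.allMems$, to equate the read sets and write sets of every transaction active in the current era, and to match their program counters. For the two index-bearing components I would relate them through the offset $base = len(ds.allMems) - len(ds.mems)$: namely $as.beginIdx_t = ds.beginIdx_t + base$ for active $t$, so that an index $n$ chosen against $ds.mems$ corresponds to $n + base$ against $as.mems$, and since $ds.mems(n) = as.mems(n+base)$ the predicate $validIdx$ transfers verbatim. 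Transactions that a crash has aborted are deliberately left unconstrained by $R$ beyond recording $ds.pc_t = \pcAborted$; in \TMS they simply remain live forever, which is legitimate because durable well-formedness guarantees their identifiers never recur, so they emit no further external events.

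With $R$ fixed, the verification splits along Definition~\ref{def:weak-sim}. Initialisation is immediate, since the start states agree and $allMems = mems$ there. For each non-crash external action, and each non-crash internal action (\doRead{t}{l, n}, \doWrite{t}{l, v}, \doCommitReadOnly{t}{n}, \doCommitWriter{t}), I would match it with the identical \TMS action: the transition relations coincide on these, the offset makes preconditions such as $validIdx$ and $rdSet_t \subseteq last(mems)$ line up, and the simultaneous extension of $mems$ and $allMems$ by \doCommitWriter{t} preserves both $as.mems = ds.allMems$ and the suffix invariant. The one genuinely different case is the crash, which Definition~\ref{def:weak-sim} lets us treat as internal: here I would use the stuttering disjunct, taking no \TMS step and proving $R(ds', as)$. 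This goes through precisely because $allMems$ is not reset, so $as.mems = ds'.allMems$ still holds; the truncated $ds'.mems = \langle last(ds.mems)\rangle$ is still a suffix of $ds'.allMems$; and the newly-set $\pcAborted$ values fall under the crash-aborted clause of $R$.

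The main obstacle I anticipate is the index bookkeeping rather than the crash case itself. One must thread the offset $base$ through $beginIdx_t$, through the witness index $n$ in \doRead{t}{l, n} and \doCommitReadOnly{t}{n}, and through $validIdx$, while arguing that $base$ evolves correctly (unchanged by ordinary steps, increased by one on each \doCommitWriter{t}, and jumping at each crash as $ds.mems$ is truncated) and that $beginIdx_t$ for active transactions stays within the current era so the shifted index remains in range. Establishing the supporting invariant that $ds.mems$ is a suffix of $ds.allMems$ together with the contents-agreement $ds.mems(n) = as.mems(n+base)$ is where the real work lies; the compensation for the silent abort, as the sketch already observes, is comparatively easy.
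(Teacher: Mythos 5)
Your proposal is correct and follows essentially the same route as the paper's proof: the paper's weak simulation likewise relates $cs.mems$ to a suffix of $as.mems$ via an (existentially quantified) offset $i$, shifts $beginIdx_t$ by that offset so that $validIdx$ transfers, equates status, read and write sets only for non-aborted transactions, and discharges the crash by the stuttering disjunct --- your auxiliary variable $allMems$ merely materialises the paper's witness $i$ as $len(allMems)-len(mems)$. One small slip in your final paragraph: on a \doCommitWriter{t} step both $mems$ and $allMems$ grow by one, so the offset is \emph{unchanged} (not incremented); it only jumps at a crash, exactly as your earlier sentence about the simultaneous extension correctly states.
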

\begin{proof}
By Lemma \ref{app:lem:weak-sim-prop}, it is enough to exhibit a weak simulation.
We define our weak simulation $R \subseteq states(\DTMS) \times states(\TMS)$
as follows (we explain the components of this relation shortly):
\begin{align}\label{app:eq:R-def}
(cs, as) \in R \iff \exists i \in \nat. (cs, as) \in G(i) \wedge \forall t. (cs, as) \in L(i, t)
\end{align}
Recall that crash events in $\DTML$ cause $\DTML$'s memory sequence to
be shortened to a length 1 sequence. There is no corresponding event in
$\TML$. Thus, the primary difficulty in our proof is relating the sequence of memories in
states of $\DTML$ with those of $\TML$. The existentially quantified index $i$
in Equation \ref{app:eq:R-def} allows us to do this. Informally,
if $(cs, as) \in R$ then the memory sequence in $c$ is equal to the
suffix of the memory sequence in $a$ beginning at index $i$.

The {\em global relation} $G$, indexed by $i$ is the conjunction of the following::
\begin{align}
\slen(cs.mems) + i &= \slen(as.mems) \label{sim:eq:mems-len}\\
\forall n < \slen(cs.mems). cs.mems(n) &= as.mems(n + i)\label{sim:eq:mems-vals}
\end{align}
The local relation $L$, indexed by $i$ and transaction index $t$ is the
conjunction of the following:
\begin{align}
cs.status_t \notin \{NotStarted&, Committed, Aborted\} \implies\nonumber\\
&cs.beginIndex_t + i = as.beginIndex_t \label{sim:eq:begin-index}
\end{align}
and
\begin{align}
cs.status_t &\neq Aborted \implies cs.status = as.status \label{sim:eq:status}\\
cs.status_t &\neq Aborted \implies cs.rdSet = as.rdSet \label{sim:eq:rdset}\\
cs.status_t &\neq Aborted \implies cs.wrSet = as.wrSet \label{sim:eq:wrset}
\end{align}

We now prove that $R$ is a weak simulation.

\noindent {\bf Initialisation}. Initially, we let $i = 0$.
Because initially $cs.mems = as.mems = [m]$ where $m$ is
the initial memory state, we have 
\begin{align*}
\slen(cs.mems) + i &= \slen(cs.mems) + 0\\
                  &= \slen(as.mems)
\end{align*}
and
\begin{align*}
\forall n < \slen(cs.mems). cs.mems(n) &= as.mems(n + 0)\\
                                     &= as.mems(n)
\end{align*}
and so for initial states $cs, as$, we have $(cs, as) \in G(0)$. Also,
we have $(cs, as) \in L(t, 0)$ as required.

\noindent {\bf External step correspondence}. There are eight cases
to consider. We directly address two. The other cases are very similar.
(Note that we do not treat the $Crash$ action as external, in this weak simulation.)

Let $a = BeginInv_t$ for some thread $t$ and let $cs \stackrel{a}{\longrightarrow}_C cs'$
be a transition of $\DTML$. Let $as$ be an abstract state such that $(cs, as) \in R$,
and let $i$ be the index that witness the existential quantification of $R$.
We first show that the precondition of $BeginInv_t$ is satisfied by $as$.
Note that $cs.status_t = NotStarted$, and thus (by Equation \ref{sim:eq:status}),
we have $as.status_t = NotStarted$, which is sufficient. Because $as$ satisfies
$a$'s precondition, we can let $as'$ be the unique state satsifying 
$as \stackrel{a}{\longrightarrow}_A as'$. It remains to show that $(cs', as') \in R$.
First observe that $(cs', as') \in G(i)$, because $cs'.mems = cs.mems$, $as'.mems = as.mems$
and $(cs, as) \in G(i)$. Furthermore, note that
\begin{align*}
as'.beginIndex_t &= \slen(as.mems)  & \text{Transition relation of $\TML$}\\
                 &= \slen(cs.mems) + i & \text{Equation \ref{sim:eq:mems-len}}\\
                 &= cs'.beginIndex_t + i & \text{Transition relation of $\DTML$}
\end{align*}
as required for Equation \ref{sim:eq:status}. It is easy to check the other conditions
that $(cs', as') \in L(t, i)$.

Let $a = CommitInv_t$ for some thread $t$ and let $cs \stackrel{a}{\longrightarrow}_C cs'$
be a transition of $\DTML$. Let $as$ be an abstract state such that $(cs, as) \in R$,
and let $i$ be the index that witness the existential quantification of $R$.
We first show that the precondition of $CommitInv_t$ is satisfied by $as$.
Note that $cs.status_t = Ready$, and thus (by Equation \ref{sim:eq:status}),
we have $as.status_t = Ready$, which is sufficient. Because $as$ satisfies
$a$'s precondition, we can let $as'$ be the unique state satsifying 
$as \stackrel{a}{\longrightarrow}_A as'$. It remains to show that $(cs', as') \in R$.
First observe that $(cs', as') \in G(i)$, because $cs'.mems = cs.mems$, $as'.mems = as.mems$
and $(cs, as) \in G(i)$. Furthermore, the only local variables that change are
$cs'.status_t$ and $as'.status_t$ so $(cs', as') \in L(t, i)$ is essentially immediate from
$(cs, as) \in L(t, i)$.

\noindent {\bf Internal step correspondence}. There are six cases
to consider. We directly address two. The other cases are very similar.
We first address the $Crash$ action.

Let $a = Crash$ for some thread $t$ and let $cs \stackrel{a}{\longrightarrow}_C cs'$
be a transition of $\DTML$. Let $as$ be an abstract state such that $(cs, as) \in R$,
and let $i$ be the index that witness the existential quantification of $R$.
We must show that $(cs', as) \in R$. It is sufficient to prove that
\begin{align*}
(cs', as) \in G(\slen(as.mems) - 1) \wedge \forall t. (cs', as) \in L(t, \slen(as.mems) - 1)
\end{align*}
So we let $i' = \slen(as.mems) - 1$. Note that $cs'.mems = [last(cs.mems)]$. Thus
\begin{align*}
\slen(as.mems) &= 1 + \slen(as.mems) - 1\\
               &= \slen(cs'.mems) + \slen(as.mems) - 1
\end{align*}
as required for Equation \ref{sim:eq:mems-len}. Also, if $n < \slen(cs'.mems)$ then $n=0$,
and
\begin{align*}
cs'.mems(0) &= last(cs.mems)\\
            &= as.mems(\slen(cs.mems) - 1 + i) & \text{Equation \ref{sim:eq:mems-vals}}\\
            &= as.mems(\slen(as.mems) - 1) & \text{Equation \ref{sim:eq:mems-len}}
\end{align*}
as required for equation \ref{sim:eq:mems-vals}. To prove the local relation,
note that for all $t$, $cs'.status_t \in \{NotStarted, $ $Committed, Aborted\}$
so there is nothing to prove for Equation \ref{sim:eq:status}. Furthermore,
If $cs'.status_t = NotStarted$ then $cs.status_t = NotStarted$ and therefore
the other properties of $L(t, \slen(as.mems) - 1)$ are straightforwardly
maintained.

Now, let $a = DoCommitWriter_t$ for some thread $t$ and let $cs \stackrel{a}{\longrightarrow}_C cs'$
be a transition of $\DTML$. Let $as$ be an abstract state such that $(cs, as) \in R$,
and let $i$ be the index that witness the existential quantification of $R$.
In this case we show that the $\DTML$ transition simulations the $DoCommitWriter_t$
transition in $\TML$. As usual, we show that the precondition holds in $as$.
Again, Equation \ref{sim:eq:status} is enough to prove that the status part of the
precondition holds. We must also show that if $cs.rdSet_t$ is consistent
with respect to $last(cs.mem)$ it is also consistent with respect to
$last(as.mem)$. But
\begin{align*}
last(cs.mem) &= cs.mems(\slen(cs.mems) - 1)\\
             &= as.mems(\slen(cs.mems) - 1 + i)\\
             &= as.mems(\slen(as.mems) - 1)\\
             &= last(as.mems)
\end{align*}
which is sufficent. Clearly, because $cs.wrSet_t = as.wrSet_t$, the
nonemptiness of $cs.wrSet_t$ implies the nonemptiness of $as.wrSet_t$.
Because, $as$ satsifies the precondition of $a$, we let $as'$ be the
unique state satisfying $as \stackrel{a}{\longrightarrow}_A as'$.
It remains to show that $(cs', as') \in R$. To do so, we let $i$ be the
index witnessing this fact, so that we prove
\begin{align*}
(cs', as) \in G(i) \wedge \forall t. (cs', as) \in L(t, i)
\end{align*}
First, observe that
\begin{align*}
\slen(as'.mem) &= \slen(as.mem) + 1\\
               &= \slen(cs.mem) + i + 1\\
               &= \slen(cs'.mem) + i
\end{align*}
as required for Equation \ref{sim:eq:mems-len}. It is straightforward to see
that \ref{sim:eq:mems-vals} is preserved. The only local variables that change
are $cs'.status_t$ and $as'.status_t$, which are both equal to $Committed$,
so $(cs', as') \in L(t, i)$.

Similar arguments prove that the internal step correspondence conditio is met
for the other actions.
\end{proof}

\begin{theorem}[Soundness of $\DTMS$]
Every trace of\, \DTMS is durably opaque.
\end{theorem}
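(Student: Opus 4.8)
The plan is to reduce durable opacity of \DTMS to ordinary opacity of \TMS, exploiting the fact \cite{LLM12} that every trace of \TMS is already opaque. Recall that \TMS is exactly the automaton of \reffig{fig:tms2} with the crash action deleted, and that (by \refdef{def:duropaque}) a history $h$ is durably opaque precisely when it is durably well-formed and $ops(h)$ is opaque. So it suffices to show, for every $h \in traces(\DTMS)$, both that $h$ is durably well-formed and that $ops(h) \in traces(\TMS)$; opacity of $ops(h)$ then follows immediately, yielding durable opacity of $h$.

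First I would establish the trace inclusion $\{ops(h) \mid h \in traces(\DTMS)\} \subseteq traces(\TMS)$. Since $ops$ deletes exactly the crash events, the natural device is a simulation from \DTMS to \TMS in which the crash action is matched by an empty (stuttering) abstract step, i.e. treated as internal, while every other external action of \DTMS is matched by the identical action of \TMS. I would formalise this as a weak simulation (the conditions of \refdef{def:for-sim}, but with the crash action moved from the external to the internal case) and then appeal to the routine induction that such a simulation carries each \DTMS execution to a \TMS execution whose trace equals $ops$ of the original trace.

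The main obstacle is choosing a simulation relation that survives a crash. In \DTMS the crash action truncates the memory sequence to $\langle last(mems)\rangle$, whereas \TMS has no matching step and its memory sequence only ever grows; so after several crash-and-commit cycles the two $mems$ sequences have different lengths and cannot be related by equality. My fix is to relate them up to a shift: the relation carries an existentially quantified offset $i$ with $len(cs.mems) + i = len(as.mems)$ and $cs.mems(n) = as.mems(n+i)$ for all $n < len(cs.mems)$, i.e. the concrete \DTMS memory sequence equals the suffix of the abstract \TMS sequence starting at index $i$. A begin must then shift $beginIdx$ by $i$, so the local part of the relation carries the same shift; the delicate check is the crash step itself, where $i$ must be re-chosen as $len(as.mems)-1$ so that the new singleton $cs.mems$ matches the last abstract memory, and one must verify that every in-flight transaction has had its program counter forced to $aborted$ (hence is excluded from the status, read-set and write-set clauses of the relation).

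Finally, durable well-formedness of $h$ is a structural property I would read off the automaton directly rather than from the simulation: each transaction identifier is used in at most one begin, and the crash action sets every in-flight $pc_t$ to $aborted$, so no identifier can appear in two eras and each era's projection is well-formed. Combining durable well-formedness with $ops(h) \in traces(\TMS)$ and the opacity of \TMS traces yields durable opacity of every \DTMS trace, as required.
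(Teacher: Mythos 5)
Your proposal matches the paper's own proof essentially exactly: the same reduction to opacity of \TMS via a weak simulation that treats the crash as an internal stuttering step, and even the same shift-indexed relation ($\mathit{len}(cs.mems)+i=\mathit{len}(as.mems)$ with $cs.mems(n)=as.mems(n+i)$, re-choosing $i=\mathit{len}(as.mems)-1$ at a crash). Your explicit verification of durable well-formedness is a small point the paper's argument leaves implicit, but otherwise the two arguments coincide.
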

\begin{proof}
Let $h \in traces(\DTMS)$. By Lemma \ref{app:lem:wtrace-incl},
$ops(h) \in traces(\TMS)$ and so $ops(h)$ is opaque \cite{LLM12}. Now, by
Definition \ref{def:duropaque}, $h$ is durably opaque.

\end{proof}


\end{document}